\newcommand{\bra}[1]{\left\langle{#1}\right\vert}
\newcommand{\ket}[1]{\left\vert{#1}\right\rangle}
\newcommand{\outter}[2]{\ket{#1}\bra{#2}}
\newtheorem{definition}{Definition}
\theoremstyle{definition}
\newtheorem{theorem}{Theorem}
\theoremstyle{plain}
\newtheorem{lemma}{Lemma}
\DeclareMathOperator{\Tr}{Tr}
\begin{document}

\preprint{APS/123-QED}

\title{Unconditional Security with Decoherence-Free Subspaces}

\author{Elloá B. Guedes}
\author{Francisco M. de Assis}%
\affiliation{
IQuanta -- Institute for Studies in Quantum Computation and Information\\
Federal University of Campina Grande\\
Campina Grande -- Paraíba -- Brazil
}%

\date{\today}

\begin{abstract}
We show how to use decoherence-free subspaces over collective-noise quantum channels to convey classical information in perfect secrecy. We argue that codes defined over decoherence-free subspaces are codes for quantum wiretap channels in which the gain of information by a non-authorized receiver is zero. We also show that if some symmetry conditions are guaranteed, the maximum rate on which these secret communications take place is equal to the Holevo-Schumacher-Westmoreland capacity of the quantum channel.
\end{abstract}

\pacs{03.65.Yz, 03.67.Hk, 89.70.-a}

\maketitle

\section{Introduction}

Preventing errors in quantum information is one of the main objectives of Quantum
Information Theory. Errors can arise from the coupling of the system of interest to the environment, and the subsequent \emph{decoherence} induced by this coupling. Because of the fragile nature of quantum states, decoherence is considered as the main obstacle in the transmission of coherent information \cite{Schlosshauer:DecoherenceBook}.

In the context of Quantum Communications, decoherence is responsible for the information leakage out to the environment in a noisy quantum channel. If secret messages are conveyed through it, at least part of them can be gathered by a non-authorized receiver called \emph{wiretapper}. This situation is undesired in a cryptographic scenario and must be avoided.

Cai et. al~\cite{Cai:QuantumWiretap} and Devetak \cite{Devetak:QuantumSecrecyCapacity} modeled this scenario in the so called \emph{quantum wiretap channels}. They also established the conditions to perform classical information exchange without being deceived by a wiretapper. In their formulation, only codes that minimize the error decoding probability and that maximize the equivocation of the wiretapper are adequate. But the maximum rate in which the secret communication takes place, the \emph{quantum secrecy capacity}, is usually below the ordinary capacity of the channel to convey classical information.

For overcoming decoherence, some good methods have been proposed
such as quantum error-correcting codes, dynamical decoupling, decoherence-free subspaces (DFS), and so on \cite{Byrd:ErrorPreventionLeakageElimination}. Regarding DFS, in particular, if the error operators that affect the qubits have some symmetries, then the qubits will suffer from the same noise in quantum channel and that will compensate the effects, keeping the invariability of these states, i.e., no decoherence takes place in such subspaces \cite{Lidar:DecoherenceFreeSubspaces}. Therefore, these subspaces seem to be suitable to construct codes that minimize the information leakage out to the environment.

So far, some works in the literature already explore the potential of DFS in Quantum Communications. All of them consist of protocols against certain types of collective noise (such as rotation, dephasing, amplitude damping, among others) and consider the use of small DFS (with two or three qubits, for instance) \cite{Gu:DeterministicSecureQuantumCommunications,Majgier:ProtectedSubspacesQuantumInformation,Byrd:UniversalLeakegeElimination,Qin:AmplitudeDampingCollectiveNoise,Dong:DSQC}. Even experimental realization were already implemented aiming at quantum information processing \cite{Viola:ExperimentalRealization,Beige:ImplementationDFS,Kielpinski:ExperimentalRelatization,Kwiat:ExperimentalRealization}. In the perspective of these works, the protection of information means avoiding the loss of coherence, maintaining the fidelity of the quantum states.

In this paper, we investigate more general consequences of using DFS in Quantum Communications from the perspective of \emph{secure message exchange}. We state a formal definition of quantum channels which satisfy the DFS symmetry criteria, define codes over their subspaces, and also establish the conditions to perform secret communications in this scenario. We conclude that any code defined with such characteristics is also a code suitable for quantum wiretap channels. Moreover, the secrecy capacity in such subspaces is equal to the Holevo-Schumacher-Westmoreland (HSW) capacity in these quantum channels. This is a particular case in which the secrecy capacity is maximal.

The rest of this paper is structured as follows. The conditions for quantum privacy, established by Schumacher and Westmoreland \cite{Schumacher:QuantumPrivacyCoherence}, as well as the concepts regarding quantum wiretap channels are recalled in Section \ref{sec:quantumPrivacyWiretap}. The fundamentals on DFS are introduced in Section \ref{sec:dfs}. Our contributions are stated in Section \ref{sec:results}. A detailed example illustrating the results obtained is shown in Section \ref{sec:example}. Lastly, final remarks are presented in Section \ref{sec:remarks}.

\section{Quantum Privacy and Quantum Wiretap Channels}\label{sec:quantumPrivacyWiretap}

Suppose that a sender (Alice) prepares a quantum system $B$ in an initial state $\rho$. Her objective is to send this system to a specific receiver (Bob) through a noisy quantum channel, denoted by the superoperator $\mathcal{E}^B$. This way, the state of the system received by Bob is denoted by $\rho_{\textrm{Bob}} = \mathcal{E}^B(\rho)$.

Due to the presence of the noise, to provide a unitary description of the evolution of $\rho_{\textrm{Bob}}$ along the channel it is necessary to consider the interaction with the \emph{environment} which is initially defined to be in a pure state $\ket{0_E}$. In this case, the superoperator is given by

\begin{eqnarray}
\mathcal{E}^B(\rho) &=& \Tr_E U^{BE} \left(\rho \otimes \outter{0_{\textrm{E}}}{0_{\textrm{E}}} \right) U^{BE\dagger}
\end{eqnarray} where $U^{\textrm{BE}}$ stands for a unitary interaction operation.

Suppose that Alice is using this quantum channel to send classical information to Bob. Alice then prepares the quantum system $B$ into one of the possible states $\rho_{k}$ with a priori probabilities $p_k$. To decode the message received, Bob performs measurements using some \emph{decoding observable}. The amount of classical information conveyed from Alice to Bob, which we will denote $\mathbf{H}_{\textrm{Bob}}$, is governed by the Holevo quantity $\chi^{\textrm{Bob}}$, defined as

\begin{equation}
\chi^{\textrm{Bob}} = S(\rho_{\textrm{Bob}}) - \sum_k p_k S\left(\rho_{\textrm{Bob},k}\right) \label{eq:bobHolevo}
\end{equation} Some considerations about the Holevo quantity in this scenario must be mentioned: ($i$) $\mathbf{H}_{\textrm{Bob}} \leq \chi^{\textrm{Bob}}$ regardless of the decoding observable chosen; and ($ii$) $\mathbf{H}_{\textrm{Bob}}$ can be made arbitrarily close to $\chi^{\textrm{Bob}}$ through a suitable choice of a code and decoding observable.  In this case,  $\chi^{\textrm{Bob}}$ represents an upper bound on the classical information conveyed from Alice to Bob.

When considering the cryptographic purposes of the channel, then the eavesdropper (Eve) must have access to some or all of the environment $E$ with which $B$ interacts. The evolution superoperator $\mathcal{E}^B$ describes all effects of the eavesdropper on the channel or, in other words, all of the efforts to wiretap the channel between Alice and Bob are contained in the interaction operator $U^{BE}$.  This way, the information available to Eve, denoted by $\mathbf{H}_{\textrm{Eve}}$, will be limited by:

\begin{equation}
\chi^{\textrm{Eve}} = S(\rho_{\textrm{Eve}}) - \sum_k p_k S(\rho_{\textrm{Eve},k}) \label{eq:evaHolevo}
\end{equation} The inequality $\mathbf{H}_{\textrm{Eve}} \leq \chi^{\textrm{Eve}}$ holds whether Eve has access or not to the entire environment.

The \emph{quantum privacy} is defined as
\begin{equation}
P = \mathbf{H}_{\textrm{Bob}} - \mathbf{H}_{\textrm{Eve}} \label{eq:privacidadeQtca}
\end{equation} Alice and Bob wish to make $P$ as large as possible. But, they must assume that the eavesdropper is acquiring her greatest possible information from the channel. The \emph{guaranteed privacy}, $P_G = \inf P$, is the infimum over all possible strategies adopted by Eve. Since $\mathbf{H}_{\textrm{Eve}} \leq \chi^{\textrm{Eve}}$, then $P_G \geq \mathbf{H}_{\textrm{Bob}} - \chi^{\textrm{Eve}}$. On the other hand, Alice and Bob will want to use the channel to make the guaranteed privacy $P_G$ as great as possible. Let $\mathcal{P} = \sup P_G$. The best scheme they use aims to make $\mathbf{H}_{\textrm{Bob}}$ close to $\chi^{\textrm{Bob}}$. This way, we denote $\mathcal{P}$ as

\begin{equation}
\mathcal{P} = \chi^{\textrm{Bob}} - \chi^{\textrm{Eve}} \label{eq:quantumPrivacy}
\end{equation}

Despite the characterization of the quantum privacy, it is necessary to characterize schemes that describe how Alice and Bob should proceed to establish the channel properties to perform secure communications without being deceived by a wiretapper. To do so, Cai et al. \cite{Cai:QuantumWiretap} and Devetak \cite{Devetak:QuantumSecrecyCapacity} simultaneously characterized the \emph{quantum wiretap channels}, defined as follows.

\begin{definition} \label{def:quantumWiretapChannel}
A quantum memoryless wiretap channel is described by a pair of superoperators $\mathcal{E}^{B}$ and $\mathcal{E}^{E}$ from a complex Hilbert space $\mathcal{H}$. When Alice sends a quantum state $\omega$ from $\mathcal{H}^{\otimes n}$, Bob receives $\mathcal{E}^{\otimes n, B}(\omega)$ and Eve receives $\mathcal{E}^{\otimes n, E}(\omega)$, where $n$ is the dimension of the input Hilbert space.
\end{definition}

The codes used by the legitimate participants of the communication are characterized in Definition \ref{def:codesQuantumWiretap}.

\begin{definition} \label{def:codesQuantumWiretap}
A set of codewords of length $n$ ($n = \textrm{dim} (\mathcal{H})$) for a set \ $\mathcal{U}$ of classical messages is a set of input states labeled by messages in $\mathcal{U}$, $\Omega(\mathcal{U}) = \left\{ \omega(u) : u \in \mathcal{U} \right\}$, and a decoding measurement of length $n$ for the set $\mathcal{U}$ of messages is a measurement on a length-$n$ output space of the channel with outcome in $\mathcal{U}$, i.e., a set of positive operators $\mathcal{D}_u$, $u \in \mathcal{U}$ with $\sum_{u \in \mathcal{U}} \mathcal{D}_u \leq \mathbbm{1}$. The pair $( \Omega(\mathcal{U}), \left\{ \mathcal{D}_u : u \in \mathcal{U} \right\})$ is called a code of length $n$ for the set $\mathcal{U}$ of messages. The rate of this code is $\frac{1}{n}\log \left| \mathcal{U} \right|$.
\end{definition}

The Figure \ref{fig:quantumWiretap} synthesizes the quantum wiretap channel general idea. Alice creates a quantum state $\omega(u)$ when she wants to send the message $u \in \mathcal{U}$ to Bob. Due to the noise, Bob receives $\mathcal{E}^{\otimes n, B}( \omega(u)) = \Tr_E \left[ \mathcal{E}^{\otimes n} (\rho \otimes \outter{0_E}{0_E}) \right]$ and decodes it using a positive operator-value measurement (POVM) $\left\{ \mathcal{D}_u : u \in \mathcal{U} \right\}$, that results in an estimate $u'$ for $u$. Eve, in turn, receives the state $\mathcal{E}^{\otimes n, E}(\omega(u)) = \Tr_B \left[ \mathcal{E}^{\otimes n} (\rho \otimes \outter{0_E}{0_E}) \right]$ and will try to obtain as much information as possible from the message sent by Alice. To do so, she will try to build a POVM based on the typicality of the states measured by her following a strategy presented in \cite[Sec. 4]{Cai:QuantumWiretap}.

\begin{figure}[h!]
  \centering
  \includegraphics[width=0.45\textwidth]{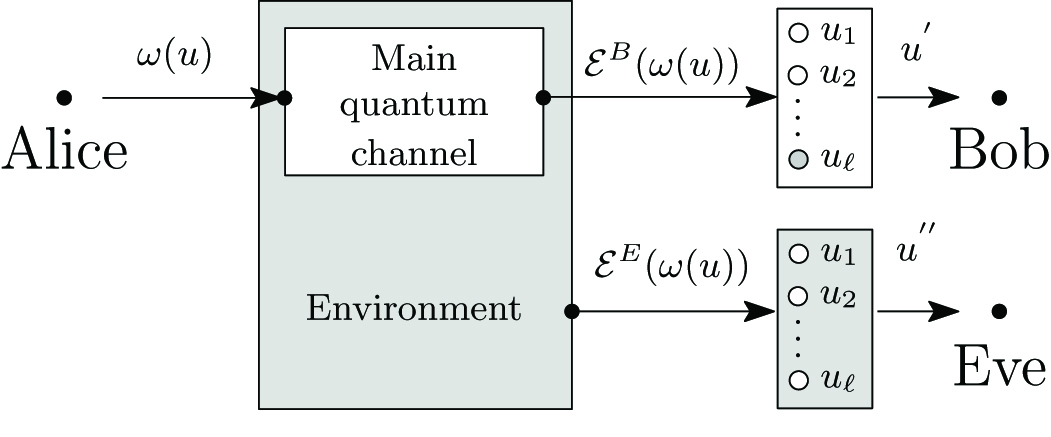}\\
  \caption{General idea of the quantum wiretap channel.}\label{fig:quantumWiretap}
\end{figure}

However, despite the strategy of communication has been depicted, the arguments of security regarding the message exchanged have not been defined yet. It is necessary to ensure a small decoding error probability between the legitimate participants of the communication while the wiretapper learns almost nothing of the secret message. The formalization of these two requirements is presented in Definition \ref{def:wiretapCodes}.

\begin{definition} \label{def:wiretapCodes} (Wiretap Code)
A code $( \Omega(\mathcal{U}), \left\{ \mathcal{D}_u : u \in \mathcal{U} \right\})$ of length $n$ is called a wiretap code with parameters $(n,\left| \mathcal{U} \right|, \lambda, \mu)$ if, for $\lambda, \mu >0$:

\begin{equation}
P_e = 1 - \frac{1}{\left|\mathcal{U}\right|} \sum_{u \in \mathcal{U}} \Tr_E \left[ \mathcal{E}^{\otimes n,B}(\omega(u)) \mathcal{D}_u \right] \leq \lambda \label{eq:quantumWiretapError}
\end{equation} and

\begin{equation}
\footnotesize{\frac{1}{n}\left[ S \left( \sum_{u \in \mathcal{U}} \frac{1}{\left|\mathcal{U}\right|} {\mathcal{E}}^{\otimes n, E}(\omega(u)) \right) - \sum_{u \in \mathcal{U}}\frac{1}{\left|\mathcal{U}\right|} S\left({\mathcal{E}}^{\otimes n ,E}(\omega(u))  \right)\right] < \mu } \label{eq:quantumWiretapSegredo}
\end{equation} where $\frac{1}{n}\log \left| \mathcal{U} \right|$ is called the code rate.
\end{definition}

In this definition, Eq. (\ref{eq:quantumWiretapError}) guarantees that the decoding error probability for Bob is, on average, smaller than $\lambda$; and Eq. (\ref{eq:quantumWiretapSegredo}) bounds the average accessible information in such a way that the wiretapper can obtain (almost) nothing about the message sent by Alice.

Lastly, the \emph{quantum secrecy capacity} is defined as:

\begin{definition}  (\emph{Quantum Secrecy Capacity \cite{Cai:QuantumWiretap}})  The secrecy capacity of a quantum channel is the maximum real number $C_S$ such that for all $\epsilon, \lambda, \mu > 0$ and sufficiently large $n$ there exists a $(n,\left| \mathcal{U} \right|, \lambda, \mu)$ code with:

\begin{equation}
C_S < \frac{1}{n} \log \left| \mathcal{U} \right| + \epsilon
\end{equation}
\end{definition}

Despite the previous definitions assume messages uniformly distributed, the following theorem about the secrecy capacity due to \cite[Sec. 5]{Cai:QuantumWiretap} is a more general result.

\begin{theorem} For a quantum wiretap channel $\mathcal{E}$ as characterized in the Definition \ref{def:quantumWiretapChannel}, the quantum secrecy capacity satisfies:

\begin{equation}
C_{S}(\mathcal{E}) \geq \max_{\left\{ P\right\}} \left[ \chi^{\textrm{Bob}} - \chi^{\textrm{Eve}} \right] \label{eq:quantumSecrecyCapacity}
\end{equation} where the maximum is taken over all probability distributions $P$ over $\mathcal{U}$; and $\chi^{\textrm{Bob}}$ and $\chi^{\textrm{Eve}}$ are the Holevo quantities given in Eqs. (\ref{eq:bobHolevo}) and (\ref{eq:evaHolevo}), respectively.
\end{theorem}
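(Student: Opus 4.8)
The plan is to establish this achievability bound by the random coding argument of Cai et al.~\cite{Cai:QuantumWiretap}, which superimposes the HSW coding theorem (used to guarantee Bob's decoding) on a quantum covering, or channel resolvability, argument (used to blind Eve). Fix a distribution $P=\{p_k\}$ over an ensemble $\{\rho_k\}$ of signal states and a small $\delta>0$. I would draw a random codebook of $\lvert\mathcal{U}\rvert\cdot\lvert\mathcal{J}\rvert$ codewords $\omega(u,j)=\rho_{k_1}\otimes\cdots\otimes\rho_{k_n}$, with the letters i.i.d.\ according to $P$, and partition it into $\lvert\mathcal{U}\rvert$ bins of size $\lvert\mathcal{J}\rvert$. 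To send a message $u\in\mathcal{U}$ Alice selects $j$ uniformly at random within bin $u$ and transmits $\omega(u,j)$; Bob's decoder produces an estimate of the pair $(u,j)$ and then discards $j$.

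For the reliability condition (\ref{eq:quantumWiretapError}): by the HSW theorem, together with the fact recalled in Section~\ref{sec:quantumPrivacyWiretap} that $\mathbf{H}_{\textrm{Bob}}$ can be made arbitrarily close to $\chi^{\textrm{Bob}}$, as long as the \emph{total} rate satisfies $\frac1n\log(\lvert\mathcal{U}\rvert\lvert\mathcal{J}\rvert)<\chi^{\textrm{Bob}}-\delta$ there is a POVM $\{\mathcal{D}_{u,j}\}$ whose average decoding error over the random codebook is below any prescribed $\lambda$. Setting $\mathcal{D}_u=\sum_j\mathcal{D}_{u,j}$ yields the decoding measurement in the sense of Definition~\ref{def:wiretapCodes}.

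For the secrecy condition (\ref{eq:quantumWiretapSegredo}): write $\sigma_u=\frac1{\lvert\mathcal{J}\rvert}\sum_j\mathcal{E}^{\otimes n,E}(\omega(u,j))$ for Eve's state when $u$ is sent and $\bar\sigma=\frac1{\lvert\mathcal{U}\rvert}\sum_u\sigma_u$ for the average; the left-hand side of (\ref{eq:quantumWiretapSegredo}) is $\frac1n\bigl(S(\bar\sigma)-\frac1{\lvert\mathcal{U}\rvert}\sum_uS(\sigma_u)\bigr)$. The crucial estimate is a covering lemma: if $\frac1n\log\lvert\mathcal{J}\rvert>\chi^{\textrm{Eve}}+\delta$, then the operator Chernoff bound applied on the typical subspace of Eve's output ensemble shows that, with probability approaching $1$ over the codebook, every $\sigma_u$ is trace-norm close to the fixed state obtained by feeding the \emph{average} input into $\mathcal{E}^{\otimes n,E}$, hence close to $\bar\sigma$; continuity of the von Neumann entropy (Fannes) then forces $S(\sigma_u)$ and $S(\bar\sigma)$ to coincide up to a $o(n)$ term, so the bracket is below $\mu$ for $n$ large.

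It remains to combine the two. Choosing $\frac1n\log\lvert\mathcal{J}\rvert=\chi^{\textrm{Eve}}+\delta$ and the total rate equal to $\chi^{\textrm{Bob}}-\delta$ leaves message rate $\frac1n\log\lvert\mathcal{U}\rvert=\chi^{\textrm{Bob}}-\chi^{\textrm{Eve}}-2\delta$; since the expected decoding error and the expected secrecy defect are both small, an expurgation argument discards the bad codebooks and extracts one deterministic code obeying (\ref{eq:quantumWiretapError}) and (\ref{eq:quantumWiretapSegredo}) simultaneously. Letting $\delta\to0$ gives $C_S(\mathcal{E})\ge\chi^{\textrm{Bob}}-\chi^{\textrm{Eve}}$ for the chosen $P$, and optimizing over $P$ yields (\ref{eq:quantumSecrecyCapacity}). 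The main obstacle is the Eve side: one must make the covering estimate hold uniformly over the random ensemble and, at the same time, on the very codebook for which Bob's decoder is good, which is precisely what the expurgation step secures.
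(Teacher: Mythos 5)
This theorem is not proved in the paper at all: it is quoted from Cai, Winter and Yeung \cite{Cai:QuantumWiretap} (see also Devetak \cite{Devetak:QuantumSecrecyCapacity}), so there is no in-paper argument to compare yours against. Your sketch is a faithful reconstruction of the standard achievability proof underlying the cited result: a random codebook with binning, HSW decoding of the pair $(u,j)$ at total rate below $\chi^{\textrm{Bob}}$, an operator-Chernoff covering (resolvability) step at bin rate above $\chi^{\textrm{Eve}}$ making every bin average on Eve's side trace-norm close to a fixed state, Fannes continuity to turn that into an $o(n)$ bound on Eve's Holevo quantity, and expurgation to get a single code satisfying Eqs. (\ref{eq:quantumWiretapError}) and (\ref{eq:quantumWiretapSegredo}) simultaneously; at the level of detail given I see no gap. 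One point worth making explicit: Definition \ref{def:codesQuantumWiretap} assigns a single codeword $\omega(u)$ to each message, so your stochastic encoder should be folded in by declaring $\omega(u)$ to be the bin-averaged mixed state $\frac{1}{\lvert\mathcal{J}\rvert}\sum_j\omega(u,j)$; with $\mathcal{D}_u=\sum_j\mathcal{D}_{u,j}$ both conditions of Definition \ref{def:wiretapCodes} then read exactly as you use them, and the rest of your argument goes through unchanged.
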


The quantum secrecy capacity can be understood as the maximum rate in which is possible to convey classical information in perfect secrecy through a quantum channel. It is equivalent to the supreme of the guaranteed quantum privacy defined in Eq. (\ref{eq:quantumPrivacy}). A particular characteristic of the quantum secrecy capacity is that it does not have a single-letter characterization, i.e.,  it is not computable since it considers all the input states and also all probability distributions over them \cite{Cai:QuantumWiretap,Devetak:QuantumSecrecyCapacity}.

\section{Decoherence-Free Subspaces} \label{sec:dfs}

Due to decoherence, a quantum system begins to lose energy into the environment and decay to a ground state, its relative phase is erased and, thus, the information it carries is lost \cite{Shabani:DFSRelaxedCriteria}. In this section, we will show how to avoid these undesired effects despite the existence of decoherence.

Let a closed quantum system be composed by the \emph{system of interest} $S$ defined on a Hilbert space $\mathcal{H}$ and by the \emph{environment} $E$. The Hamiltonian that describes this system is defined as follows:

\begin{equation}
\mathbbm{H} = \mathbbm{H}_S \otimes \mathbbm{1}_E + \mathbbm{1}_S \otimes \mathbbm{H}_E + \mathbbm{H}_{SE} \label{eq:hamiltonian}
\end{equation} where $\mathbbm{1}$ is the identity operator; and $\mathbbm{H}_S$, $\mathbbm{H}_E$ and $\mathbbm{H}_{SE}$ denote the Hamiltonians of system, environment and system-environment interaction, respectively.

In order to prevent errors, it would be ideal that  $\mathbbm{H}_{SE}$ were equal to zero, indicating that system and environment are decoupled and evolve independently and unitarily under their respective Hamiltonians $\mathbbm{H}_S$ and $\mathbbm{H}_E$ \cite{Lidar:DecoherenceFreeSubspaces}. However, in practical scenarios, such an ideal situation is not possible since no system is noiseless. So, after isolating a system to the best of our ability, we should aim for the realistic goals of the identification and correction of errors when they occur and/or avoiding noises when possible and/or suppressing noise in the system \cite{Byrd:ErrorPreventionLeakageElimination}.

If some symmetries exist in the interaction between the system and the environment, it is possible to find a ``quiet corner'' in the system Hilbert space not experiencing decoherence. Let $\left\{ A_i (t) \right\}$ be a set of operators in the operator-sum representation (OSR) corresponding to the evolution of the system. We say that a system density matrix $\rho_S$ is \emph{invariant} under the OSR operators $\left\{ A_i(t) \right\}$ if $\sum_i A_i(t) \rho_S A_{i}^{\dagger}(t) = \rho_S$. We are now able to define the DFS whose states are invariant despite a non-trivial coupling between the system and the environment.

\begin{definition}
A subspace $\tilde{\mathcal{H}}$ of a Hilbert space $\mathcal{H}$ is called a DFS with respect to a system-environment coupling if every pure state from this subspace is invariant under the corresponding OSR evolution for any possible environment initial condition:

\begin{equation}
\sum_i A_i(t) |\tilde{k}\rangle\langle\tilde{k}| A_{i}^{\dagger}(t) =  |\tilde{k}\rangle\langle\tilde{k}|, \forall |\tilde{k}\rangle\langle\tilde{k}| \in \tilde{\mathcal{H}}, \forall \rho_E(0)
\end{equation}
\end{definition}

%

Let the Hamiltonian of the system-environment interaction be $\mathbbm{H}_{SE} = \sum_j \mathbf{S}_j \otimes \mathbf{E}_j$, where $\mathbf{S}_j$ and $\mathbf{E}_j$ are the system and environment operators, respectively. We consider that the environment operators $\mathbf{E}_j$ are linearly independent. The symmetries required to define a DFS are described in the theorem below. For a detailed proof or different formulations see \cite[Sec.~5]{Lidar:DecoherenceFreeSubspaces}.

\begin{theorem} \label{teo:condicoesDFS} (\emph{DFS Conditions})
A subspace $\tilde{\mathcal{H}}$ is a DFS iff the system operators $\mathbf{S}_j$ act proportional to the identity on the subspace:

\begin{equation}
\mathbf{S}_j | \tilde{k}\rangle = c_j | \tilde{k}\rangle \hspace{0.5cm} \forall j, | \tilde{k}\rangle \in \tilde{\mathcal{H}}
\end{equation}
\end{theorem}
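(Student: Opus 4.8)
The plan is to use the unitary dilation $U(t)=e^{-i\mathbbm{H}t}$ together with its operator-sum operators, which for an arbitrary initial environment pure state $|\phi_E\rangle$ are $A_i(t)=\langle i_E|U(t)|\phi_E\rangle$ for an orthonormal basis $\{|i_E\rangle\}$ of $\mathcal{H}_E$ with $|0_E\rangle=|\phi_E\rangle$; the DFS condition is then read off from how $U(t)$ acts on $\tilde{\mathcal{H}}\otimes\mathcal{H}_E$. I take the $\mathbf{S}_j$ and $\mathbf{E}_j$ Hermitian (as one may) with the $\mathbf{E}_j$ linearly independent, and — following \cite{Lidar:DecoherenceFreeSubspaces} — absorb any $\mathbbm{1}_E$-component of $\mathbbm{H}_{SE}$ into $\mathbbm{H}_S$, so that $\{\mathbf{E}_j\}\cup\{\mathbbm{1}_E\}$ is linearly independent, and assume $\mathbbm{H}_S$ leaves $\tilde{\mathcal{H}}$ invariant and acts there as a multiple of the identity, so that ``invariant'' may be read literally.

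$(\Leftarrow)$ Suppose $\mathbf{S}_j|\tilde{k}\rangle=c_j|\tilde{k}\rangle$ for every $j$ and every $|\tilde{k}\rangle\in\tilde{\mathcal{H}}$. Then the restriction of $\mathbbm{H}_{SE}$ to $\tilde{\mathcal{H}}\otimes\mathcal{H}_E$ equals $\mathbbm{1}_{\tilde{\mathcal{H}}}\otimes\left(\sum_j c_j\mathbf{E}_j\right)$, so on this $\mathbbm{H}$-invariant subspace $\mathbbm{H}$ is the sum of the two commuting operators $\mathbbm{H}_S\otimes\mathbbm{1}_E$ and $\mathbbm{1}_{\tilde{\mathcal{H}}}\otimes\left(\mathbbm{H}_E+\sum_j c_j\mathbf{E}_j\right)$ acting on separate tensor factors, whence $U(t)$ restricted there factorizes as $V(t)\otimes W(t)$ with $V(t)=e^{-i\mathbbm{H}_S t}$ a global phase on $\tilde{\mathcal{H}}$. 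Tracing out the environment for any initial $\rho_E$ then gives $\sum_i A_i(t)\outter{\tilde{k}}{\tilde{k}}A_i^{\dagger}(t)=V(t)\outter{\tilde{k}}{\tilde{k}}V(t)^{\dagger}\,\Tr_E\!\left[W(t)\rho_E W(t)^{\dagger}\right]=\outter{\tilde{k}}{\tilde{k}}$, and by linearity every pure state of $\tilde{\mathcal{H}}$ is invariant, so $\tilde{\mathcal{H}}$ is a DFS.

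$(\Rightarrow)$ For the converse I expand the invariance condition to first order in $t$. Since $A_i(t)=\delta_{i0}\mathbbm{1}_S-it\,\langle i_E|\mathbbm{H}|\phi_E\rangle+O(t^2)$, the $i\neq0$ contributions to $\sum_i A_i(t)\outter{\tilde{k}}{\tilde{k}}A_i^{\dagger}(t)$ are $O(t^2)$, so its first-order part vanishes if and only if $[M_\phi,\outter{\tilde{k}}{\tilde{k}}]=0$, where $M_\phi=\mathbbm{H}_S+\langle\phi_E|\mathbbm{H}_E|\phi_E\rangle\,\mathbbm{1}_S+\sum_j\langle\phi_E|\mathbf{E}_j|\phi_E\rangle\,\mathbf{S}_j$ is Hermitian; hence $|\tilde{k}\rangle$ is an eigenvector of $M_\phi$ for every $|\phi_E\rangle$. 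Subtracting the relations for two environment states $|\phi_E\rangle,|\phi_E'\rangle$ eliminates $\mathbbm{H}_S$ and the scalar term, so $\sum_j\lambda_j\mathbf{S}_j$ has $|\tilde{k}\rangle$ as an eigenvector for every $\lambda$ in the real linear span of the vectors $\big(\langle\phi_E|\mathbf{E}_j|\phi_E\rangle-\langle\phi_E'|\mathbf{E}_j|\phi_E'\rangle\big)_j$; since a Hermitian operator with constant expectation values is a multiple of the identity, this span is all of $\mathbb{R}^m$ (with $m$ the number of terms in $\mathbbm{H}_{SE}$) exactly because $\{\mathbf{E}_j\}\cup\{\mathbbm{1}_E\}$ is linearly independent. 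By linearity $\sum_j\lambda_j\mathbf{S}_j$ then fixes $|\tilde{k}\rangle$ for all $\lambda\in\mathbb{R}^m$; taking $\lambda=e_j$ gives $\mathbf{S}_j|\tilde{k}\rangle=c_j(\tilde{k})|\tilde{k}\rangle$, and applying this to $|\tilde{k}_1\rangle+|\tilde{k}_2\rangle\in\tilde{\mathcal{H}}$ forces $c_j$ to be independent of the state, so $\mathbf{S}_j|\tilde{k}\rangle=c_j|\tilde{k}\rangle$.

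The real content is the converse, and within it the step I expect to be the main obstacle is extracting the \emph{individual} operators $\mathbf{S}_j$ from the single constraint that $M_\phi$ act proportionally to the identity on $\tilde{\mathcal{H}}$: one must exploit both the freedom in the initial environment state and the linear independence of the $\mathbf{E}_j$ to pass from ``some combination of the $\mathbf{S}_j$ acts proportionally to the identity'' to ``each $\mathbf{S}_j$ does''. The only other delicate point is the bookkeeping with $\mathbbm{H}_S$ and $\mathbbm{1}_E$: $\mathbbm{H}_S$ drops out of the necessity argument automatically, whereas for sufficiency — and for the independence count used in necessity — one needs the normalization that any $\mathbbm{1}_E$-part of $\mathbbm{H}_{SE}$ is counted into $\mathbbm{H}_S$, and that $\mathbbm{H}_S$ be trivial on $\tilde{\mathcal{H}}$; otherwise invariance holds only up to the intrinsic unitary evolution of the DFS, which is the convention actually adopted in \cite{Lidar:DecoherenceFreeSubspaces}.
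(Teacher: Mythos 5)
Note first that the paper does not actually prove this theorem: it quotes it as a known result and points to \cite[Sec.~5]{Lidar:DecoherenceFreeSubspaces} for the proof, so there is no in-paper argument to compare against. Judged on its own, your proof is correct and is essentially a self-contained reconstruction of the standard argument in that reference. The sufficiency direction (restriction of $\mathbbm{H}$ to $\tilde{\mathcal{H}}\otimes\mathcal{H}_E$ splits into commuting pieces acting on separate factors, so $U(t)$ factorizes and the partial trace returns $\outter{\tilde{k}}{\tilde{k}}$ for every initial $\rho_E$) is the textbook route. Your necessity direction is the real content and it holds up: differentiating $\Tr_E\left[U(t)\left(\outter{\tilde{k}}{\tilde{k}}\otimes\outter{\phi_E}{\phi_E}\right)U^{\dagger}(t)\right]$ at $t=0$ correctly isolates $[M_\phi,\outter{\tilde{k}}{\tilde{k}}]=0$ (the $i\neq 0$ Kraus terms are indeed $O(t^2)$), and the passage from ``$|\tilde{k}\rangle$ is an eigenvector of $\sum_j\lambda_j\mathbf{S}_j$ for a spanning set of $\lambda$'s'' to ``each $\mathbf{S}_j$ fixes $|\tilde{k}\rangle$'' is sound, with the spanning of $\mathbb{R}^m$ by expectation-value differences correctly reduced to the linear independence of $\{\mathbf{E}_j\}\cup\{\mathbbm{1}_E\}$ via the fact that a Hermitian operator with constant expectation value is a multiple of the identity; the final step making $c_j$ state-independent on the subspace is also right. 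The conventions you flag (Hermitian $\mathbf{S}_j,\mathbf{E}_j$, absorbing any $\mathbbm{1}_E$-component of $\mathbbm{H}_{SE}$ into $\mathbbm{H}_S$, and $\mathbbm{H}_S$ acting as a scalar on $\tilde{\mathcal{H}}$) are not optional decorations but exactly what is needed for the theorem to be literally an ``iff'' under this paper's strict invariance definition $\sum_i A_i\rho A_i^{\dagger}=\rho$ (otherwise invariance holds only up to the unitary generated by $\mathbbm{H}_S$); making them explicit is a point in your favor, since the paper leaves them implicit by deferring to the cited reference.
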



In practice,  identifying a useful symmetry and taking advantage of it can be very difficult. One must ($i$) identify the symmetry, ($ii$) find the states which are invariant to the interaction, and ($iii$) construct, if possible, operations on the system which will serve as a universal set of gating operations while preserving the necessary symmetries \cite{Byrd:ErrorPreventionLeakageElimination}.

Quantum codes constructed from states of a DFS are classified as \emph{quantum error-avoiding codes} (QEAC) and the tasks of perturbation and recovery on them are trivial. They can be contrasted with \emph{quantum-error correcting codes} (QECC) in some aspects. While QECCs are devised to correct errors after their occurrence, QEACs do not have the ability to correct errors since they avoid them; QECCs devised in practical circumstances belong to the class of non-degenerate codes while QEACs are highly degenerate; QEACs usually require a lower number of physical qubits to encode one logical qubit than QECCs. In particular, if the degeneracy attains the maximum, a QECC reduces to a QEAC what illustrates a circumstance in which a type of code becomes equivalent to the other \cite{Duan:ErrorAvoidingCodes}.

The absence of decoherence in DFS has been shown as of major importance for implementations of quantum memory and quantum algorithms. Other applications of it cover encoding information in quantum dots, collective dissipation, noise reduction, among others \cite{Lidar:DecoherenceFreeSubspaces,Byrd:ErrorPreventionLeakageElimination}.

\section{DFS in Secure Quantum Communications} \label{sec:results}

We will now examine the use of DFS in Quantum Communications. We will consider the use of \emph{collective noise quantum channels}, i.e., a model of quantum channels in which several qubits couple identically to the same environment, while undergoing both dephasing and dissipation \cite{Zanardi:CollectiveDecoherence}. This particular case gives a light into some interesting consequences of the use of DFS in quantum communications. Our focus, in particular, will be in the aspects of \emph{secure message exchange}.

We consider the case that Alice wants to convey secret classical messages through a quantum channel to Bob. These messages must be protected from a wiretapper Eve that has full access to the environment. The channel between Alice and Bob has a decoherence-free subspace whose states will be used to encode the secret message. The following definition characterizes such quantum channel.

\begin{definition} \label{def:canalDFS} (Collective Noise Quantum Wiretap Channel)
A collective noise quantum wiretap channel $\mathcal{E}$ is a channel as in Definition \ref{def:quantumWiretapChannel} such that its Kraus decomposition $\left\{ A_i \right\}$ satisfies Theorem \ref{teo:condicoesDFS}.
\end{definition}

Since $\left\{ A_i \right\}$ satisfies Theorem \ref{teo:condicoesDFS}, then the channel $\mathcal{E}$ has a decoherence-free subspace $\tilde{\mathcal{H}}$. When  Alice wants to send a message to a Bob, now she can make it using a QEAC with the following definition.

\begin{definition} \label{def:DFSCode}
Let $\tilde{\mathcal{H}}$ be a DFS spanned by a set of eigenvectors $\{ | \tilde{k}\rangle \}$, i.e., \mbox{$\tilde{\mathcal{H}} = \textrm{Span}[\{ |\tilde{k} \rangle \} ]$}. A set of codewords of length $n$ ($n = \textrm{dim}(\tilde{\mathcal{H}})$) for a set $\mathcal{U}$ of classical messages is a set of input states labeled by messages in $\mathcal{U}$, $\tilde{K}(\mathcal{U}) = \{ \tilde{k}(u) : u \in \mathcal{U} \} \subseteq \tilde{\mathcal{H}}$, and a trivial decoding measurement composed of a set of positive operators $\tilde{\mathcal{D}}_u$, $u \in \mathcal{U}$ with $\sum_{u \in \mathcal{U}} \tilde{\mathcal{D}}_u \leq \mathbbm{1}$. The pair \mbox{$( \tilde{K}(\mathcal{U}), \{ \tilde{\mathcal{D}}_u : u \in \mathcal{U}\})$} is called a QEAC of length $n$ for the set $\mathcal{U}$ of messages. The rate of this code is $\frac{1}{n} \log \left| \mathcal{U} \right|$.
\end{definition}

Without loss of generality, we will consider here that the environment starts in a pure state $\outter{0_\textrm{E}}{0_\textrm{E}}$. This is a clear assumption since we can always imagine that a ``local'' environment in a mixed state is just part of a larger system in a pure entangled state \cite{Schumacher:QuantumPrivacyCoherence}.

Using the code defined, if Alice wants to send a quantum message $u$ now she encodes it the QEAC defined over $\tilde{\mathcal{H}}$, obtaining $\tilde{k}(u)$. When she sends it through the communication channel, the message interacts with the environment. Bob then receives $\rho_{\textrm{Bob}} (\tilde{k}(u))$ and Eve receives $\rho_{\textrm{Eve}}(\tilde{k}(u))$ which are given by:

\begin{eqnarray}
\rho_{\textrm{Bob}} (\tilde{k}(u)) &=& \Tr_\textrm{E} \left[ \mathcal{E}^{\otimes n}(\tilde{k}(u) \otimes \outter{0_E}{0_E}) \right]\\
\rho_{\textrm{Eve}}(\tilde{k}(u)) &=& \Tr_\textrm{B} \left[ \mathcal{E}^{\otimes n}(\tilde{k}(u) \otimes \outter{0_E}{0_E}) \right] \label{eq:finalEva}
\end{eqnarray}

Since Alice used a QEAC, then the existing dynamical symmetry protected the quantum information from the interaction with the environment. It means that the joint evolution of the system and the environment occurred in a decoupled way. Hence, the state $\rho_{\textrm{Bob}} (\tilde{k}(u)) $ is given by:

\begin{align}
\rho_{\textrm{Bob}} (\tilde{k}(u))&= \Tr_\textrm{E} \left[ \mathcal{E}^{\otimes n}(\tilde{k}(u) \otimes \outter{0_E}{0_E}) \right] \label{eq:tracoBob1}\\
&= \Tr_\textrm{E} \left[ \sum_i A_i \left( \tilde{k}(u) \otimes \outter{0_E}{0_E}\right) A_{i}^{\dagger} \right]\\
&= \Tr_\textrm{E} \left[ \tilde{k}(u) \otimes \rho_{E} \right]\label{eq:DFSunitarityConsequence2}\\
&= \tilde{k}(u) \label{eq:bobFinal}
\end{align} where Eq. (\ref{eq:DFSunitarityConsequence2}) is due to the invariance of a state of the DFS under the OSR operators. Taking into account the Hamiltonian of the quantum system given in Eq. (\ref{eq:hamiltonian}) and the fact that the system of interest and the environment did not interact, it is the case that the environment would only suffer from the action of $\mathbbm{H}_E$, indicating a unitary evolution restricted to the environment.

We will now show how such QEAC protects the information conveyed through the channel from a wiretapper.

\begin{lemma} \label{lemma:DFSWiretapCode}
A QEAC as in Definition \ref{def:DFSCode} over a collective noise quantum wiretap channel as in Definition \ref{def:canalDFS} is a wiretap code with parameters $(n, \left| \mathcal{U} \right|, \lambda, \mu)$.
\end{lemma}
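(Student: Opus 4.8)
The plan is to check directly the two conditions that Definition~\ref{def:wiretapCodes} imposes on a wiretap code: the bound~(\ref{eq:quantumWiretapError}) on the mean decoding error and the bound~(\ref{eq:quantumWiretapSegredo}) on the information leaked to Eve. I expect to establish something slightly stronger, namely that both quantities are exactly zero, so that the QEAC qualifies as a wiretap code for \emph{every} $\lambda,\mu>0$.

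For the error term I would reuse the chain of equalities already derived in Eqs.~(\ref{eq:tracoBob1})--(\ref{eq:bobFinal}): since each codeword $\tilde k(u)$ belongs to the DFS $\tilde{\mathcal H}$, it is invariant under the channel, so Bob receives exactly $\mathcal{E}^{\otimes n,B}(\tilde k(u))=\tilde k(u)$. Choosing the codewords to be mutually orthogonal pure states of $\tilde{\mathcal H}$ (which is possible as long as $|\mathcal U|\le\dim\tilde{\mathcal H}$) and taking the trivial decoding $\tilde{\mathcal D}_u=\tilde k(u)$, which obeys $\sum_{u}\tilde{\mathcal D}_u\le\mathbbm{1}$, one gets $\Tr\!\big[\mathcal{E}^{\otimes n,B}(\tilde k(u))\,\tilde{\mathcal D}_u\big]=1$ for every $u$, whence $P_e=0\le\lambda$.

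For the secrecy term the idea is to carry the same factorization one step further and examine what is left for Eve. By Theorem~\ref{teo:condicoesDFS} the system operators $\mathbf S_j$ act on $\tilde{\mathcal H}$ as $c_j\mathbbm{1}$ with coefficients $c_j$ that are common to the whole subspace, so the joint evolution factorizes as $\mathcal{E}^{\otimes n}\!\big(\tilde k(u)\otimes\outter{0_E}{0_E}\big)=\tilde k(u)\otimes\rho_E$, where $\rho_E$ is a unitarily evolved environment state that does \emph{not} depend on $u$. Tracing out Bob's system, as in Eq.~(\ref{eq:finalEva}), then gives $\mathcal{E}^{\otimes n,E}(\tilde k(u))=\rho_E$ for all $u$; hence the average state $\sum_u\frac{1}{|\mathcal U|}\mathcal{E}^{\otimes n,E}(\tilde k(u))$ equals $\rho_E$, the two entropy terms inside the bracket of~(\ref{eq:quantumWiretapSegredo}) are both $S(\rho_E)$, their difference vanishes, and dividing by $n$ leaves $0<\mu$. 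Assembling the two parts yields the claim with parameters $(n,|\mathcal U|,\lambda,\mu)$.

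The one step that actually carries the weight of the lemma --- and the place where care is needed --- is the assertion that $\rho_E$ is independent of the transmitted message. This is precisely where the DFS symmetry built into Definition~\ref{def:canalDFS} enters: if the proportionality constants $c_j$ of Theorem~\ref{teo:condicoesDFS} differed from one DFS state to another, the environment would pick up a message-dependent shift in its effective evolution and Eve's outputs could become distinguishable. Because the $c_j$ are the same for all states of $\tilde{\mathcal H}$, Eve's reduced state is strictly message-independent and her accessible information is rigorously zero; the remainder of the argument is routine bookkeeping built on top of the identities of Eqs.~(\ref{eq:tracoBob1})--(\ref{eq:bobFinal}).
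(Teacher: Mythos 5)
Your proposal is correct, and on the secrecy side it takes a genuinely different (and arguably cleaner) route than the paper. The paper's own proof argues that, because the system and environment evolve in a decoupled way, the environment ends in a \emph{pure} state, so $S(\rho_E)=0$; it then invokes $\chi^{\textrm{Eve}}\geq 0$ together with nonnegativity of the entropies to force the conditional term to vanish, concluding $\chi^{\textrm{Eve}}=0$ and hence that the left-hand side of Eq.~(\ref{eq:quantumWiretapSegredo}) is zero. You instead observe that, since the constants $c_j$ of Theorem~\ref{teo:condicoesDFS} are common to the whole subspace $\tilde{\mathcal H}$, the environment evolves under an effective Hamiltonian that does not depend on which codeword was sent, so $\mathcal{E}^{\otimes n,E}(\tilde k(u))=\rho_E$ is literally message-independent; then both entropy terms in Eq.~(\ref{eq:quantumWiretapSegredo}) equal $S(\rho_E)$ and cancel exactly, with no appeal to purity of the environment or to positivity of the Holevo quantity. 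This buys robustness: your argument goes through even if the environment's final (or initial) state is mixed, whereas the paper leans on the ``without loss of generality pure environment'' convention, and it isolates the real mechanism --- indistinguishability of Eve's outputs --- rather than an entropy bookkeeping identity. On the error side you follow the paper's chain Eqs.~(\ref{eq:tracoBob1})--(\ref{eq:bobFinal}), but you make explicit what the paper leaves implicit: that zero decoding error requires choosing mutually orthogonal codewords in $\tilde{\mathcal H}$ (possible whenever $|\mathcal U|\leq\dim\tilde{\mathcal H}$) with $\tilde{\mathcal D}_u$ the corresponding projectors; the paper merely asserts that trivial decoding succeeds because a measurement operator exists for each $u$, which by itself would not guarantee Eq.~(\ref{eq:quantumWiretapError}). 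Your added hypothesis is thus a clarification that closes a small gap rather than a restriction of the lemma's intended scope.
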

\begin{proof}
The proof is straightforward. We have to prove that the QEAC satisfies the criteria of Eqs. (\ref{eq:quantumWiretapError}) and (\ref{eq:quantumWiretapSegredo}).

Let's first analyze the average decoding error probability. Since $\tilde{k}(u)$ is in $\tilde{\mathcal{H}}$, it did not interact with the environment. So, $\rho_{\textrm{Bob}} = \tilde{k}(u)$ as shown in Eqs. (\ref{eq:tracoBob1})-(\ref{eq:bobFinal}). It turns out that the decoding is trivial and that the message sent by Alice can be perfectly recovered since there is a decoding measurement $\tilde{\mathcal{D}}_u$ for every $u \in \mathcal{U}$. We can see, thus, that there is a negligible average decoding error probability for Bob.

Then we proceed to analyze Eq. (\ref{eq:quantumWiretapSegredo}). Recall that it is the average accessible information by Eve which is bounded by the Holevo quantity, defined in Eq. (\ref{eq:evaHolevo}). We will try to obtain the Holevo quantity first.

Despite the state of the environment $\rho_E$ (vide Eq. (\ref{eq:DFSunitarityConsequence2})) is not known, the fact that Alice and Bob used states from a DFS guaranteed that the interaction Hamiltonian $\mathbbm{H}_{SE}$ did not govern the joint evolution of system and environment. Instead of that, each system evolved completely unitary under its own Hamiltonian, i.e., the environment suffered only the action of $\mathbbm{H}_E$. It implies that the environment ended in a pure state. Using this result to calculate the Holevo quantity, we have:

\begin{eqnarray}
\chi^{\textrm{Eve}} &=& S(\rho_{\textrm{Eve}}(\tilde{k}(u))) - \sum_k p_k S(\rho_{\textrm{Eve},k} \tilde{k}(u)) \label{eq:HolevoEvaInicio}\\
&=& S(\rho_E) - \sum_k p_k S(\rho_{\textrm{Eve},k} \tilde{k}(u))\\
&=& 0 - \sum_k p_k S(\rho_{\textrm{Eve},k} \tilde{k}(u))
\end{eqnarray} It is well known that the Holevo quantity $\chi^{\textrm{Eve}} \geq 0$. Since $S(\rho) \geq 0$ for any $\rho$, and that the probabilities $p_k \geq 0$ for any $k$, then it is the case that the remaining term is zero. Thus, $\chi^{\textrm{Eve}} = 0$. Since the Holevo quantity is an upper bound of the accessible information, then Eq. (\ref{eq:quantumWiretapSegredo}) is also equal to zero. It concludes the proof.
\end{proof}


We can now characterize the secrecy capacity of a collective noise quantum wiretap channel.

\begin{theorem}
The secrecy capacity of a collective noise quantum wiretap channel $\mathcal{E}$, characterized as in Definition \ref{def:canalDFS}, satisfies

\begin{equation}
C_{S,\textrm{DFS}}(\mathcal{E}) = \max_{\left\{P\right\}} \left[ \chi^{\textrm{Bob}} \right] \label{eq:SecrecyCapacityDFS}
\end{equation} where the maximum is taken over all probability distributions $P$ over $\mathcal{U}$; and $\chi^{\textrm{Bob}}$ is the Holevo quantity given in Eq. (\ref{eq:bobHolevo}).
\end{theorem}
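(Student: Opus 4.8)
The plan is to prove the claimed identity as a pair of matching inequalities. For the achievability direction, $C_{S,\textrm{DFS}}(\mathcal{E}) \geq \max_{\{P\}} \chi^{\textrm{Bob}}$, I would invoke the general lower bound of Cai et al., $C_S(\mathcal{E}) \geq \max_{\{P\}}\left[\chi^{\textrm{Bob}} - \chi^{\textrm{Eve}}\right]$ given in Eq.~(\ref{eq:quantumSecrecyCapacity}), and then specialize the maximization to ensembles supported on the decoherence-free subspace $\tilde{\mathcal{H}}$. By Lemma~\ref{lemma:DFSWiretapCode} every QEAC of Definition~\ref{def:DFSCode} over a channel of Definition~\ref{def:canalDFS} is already a wiretap code, and the computation in its proof shows that $\chi^{\textrm{Eve}} = 0$ for \emph{any} such input ensemble (the environment stays pure, so $S(\rho_{\textrm{Eve}}) = 0$, and the remaining terms are nonnegative). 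Substituting $\chi^{\textrm{Eve}} = 0$ collapses the right-hand side of Eq.~(\ref{eq:quantumSecrecyCapacity}) to $\max_{\{P\}}\chi^{\textrm{Bob}}$, which is the desired lower bound.

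For the converse direction, $C_{S,\textrm{DFS}}(\mathcal{E}) \leq \max_{\{P\}} \chi^{\textrm{Bob}}$, I would use the observation that a wiretap code is, \emph{a fortiori}, an ordinary channel code for Bob: any $(n,|\mathcal{U}|,\lambda,\mu)$ code satisfying Eq.~(\ref{eq:quantumWiretapError}) transmits the messages to Bob with average error at most $\lambda$. Restricted to $\tilde{\mathcal{H}}$ the channel acts as the identity, $\rho_{\textrm{Bob}}(\tilde{k}(u)) = \tilde{k}(u)$, as established in Eqs.~(\ref{eq:tracoBob1})--(\ref{eq:bobFinal}); hence the relevant classical (HSW) capacity for reliable transmission is exactly $\max_{\{P\}}\chi^{\textrm{Bob}}$. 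The HSW converse theorem then forces every such code to obey $\frac{1}{n}\log|\mathcal{U}| \leq \max_{\{P\}}\chi^{\textrm{Bob}} + o(1)$ as $n \to \infty$, and taking the supremum over admissible codes yields the upper bound. Combining the two inequalities gives the equality $C_{S,\textrm{DFS}}(\mathcal{E}) = \max_{\{P\}}\chi^{\textrm{Bob}}$, exhibiting the announced special case of a maximal secrecy capacity.

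The step I expect to be most delicate is the converse, where one must confirm that confining codewords to the DFS does not forfeit secret rate that a smarter code over the full input space could have attained; since the theorem concerns DFS-based communication and the DFS-restricted channel is noiseless, the HSW converse applies with no surviving ``wiretap penalty.'' A secondary care point is the passage $\max_{\{P\}}\left[\chi^{\textrm{Bob}} - \chi^{\textrm{Eve}}\right] = \max_{\{P\}}\chi^{\textrm{Bob}}$: one should note that the vanishing of $\chi^{\textrm{Eve}}$ established in Lemma~\ref{lemma:DFSWiretapCode} is uniform over all input ensembles on $\tilde{\mathcal{H}}$ and over all initial environment states, so the two maximizations share the same value (indeed the same maximizer), which is precisely what makes the achievability bound tight.
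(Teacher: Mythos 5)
Your proposal is correct and follows essentially the same route as the paper: the achievability comes from substituting $\chi^{\textrm{Eve}}=0$ (Lemma~\ref{lemma:DFSWiretapCode}) into the Cai--Winter--Yeung lower bound of Eq.~(\ref{eq:quantumSecrecyCapacity}), and the matching upper bound is the Holevo--Schumacher--Westmoreland theorem, which the paper invokes in a single sentence and you merely spell out via the standard ``a wiretap code is in particular a channel code for Bob'' argument. Your closing remarks on restricting the maximization to ensembles supported on $\tilde{\mathcal{H}}$ are a useful clarification of a point the paper leaves implicit, but they do not change the substance of the argument.
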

\begin{proof}
Let a QEAC \mbox{$( \tilde{K}(\mathcal{U}), \{ \tilde{\mathcal{D}}_u : u \in \mathcal{U}\})$} be used over the channel $\mathcal{E}$. As proved in Lemma \ref{lemma:DFSWiretapCode}, it was shown that $\chi^{\textrm{Eve}} = 0$. Firstly, this fact is substituted into Eq. (\ref{eq:quantumSecrecyCapacity}). The equality is due to the Holevo-Schumacher-Westmoreland theorem \cite{Schumacher:QuantumChannel}.
\end{proof}

We can therefore conclude that it is possible to perform unconditionally secure quantum communications through wiretapped collective noise quantum channels. The unconditional security criterion is satisfied since $\chi^{\textrm{Eve}} = 0$, meaning that no information was gathered by Eve and that the communication was carried out in perfect secrecy.

The resulting expression of the secrecy capacity of a DFS has a relation to the results presented by Schumacher and Westmoreland \cite{Schumacher:QuantumPrivacyCoherence}. These authors show that the ability of the quantum channel to send private information is at least as great as its ability to send coherent information. Since the information encoded in a DFS does not loose coherence, then its ability to send private information is maximal.

\section{Example -- Collective Dephasing} \label{sec:example}

To illustrate the results described in the previous section, we will show a detailed example of conveying secret classical messages through a collective dephasing quantum channel $\mathcal{E}$. Let's suppose that qubits on this channel couple to the environment in a symmetric manner and undergo a dephasing process, defined as:

\begin{eqnarray}
\ket{0} \rightarrow \ket{0} \hspace{1cm} \ket{1} \rightarrow e^{\imath \phi}\ket{1}
\end{eqnarray}

To overcome decoherence, Alice and Bob can take advantage from an existing symmetry in the channel. If they codify the messages using states that are immune to decoherence, Eve can learn nothing from their secret messages. To do so, Alice and Bob will use the following codification scheme

\begin{eqnarray}
\ket{0_L} = \ket{01} \hspace{1cm}
\ket{1_L} = \ket{10}
\end{eqnarray} A qubit can, thus, be codified as $\ket{\psi_L} = \alpha \ket{0_L} + \beta \ket{1_L}$. It is interesting to see that $\ket{\psi_L}$ does not suffer from the effects of decoherence

\begin{eqnarray}
\mathcal{E}(\ket{\psi_L}) &=& \mathcal{E}\left(\alpha \ket{0_L} + \beta \ket{1_L}\right)\\
&=& \alpha e^{\imath \phi}\ket{01} + \beta  e^{\imath \phi} \ket{10}\\
&=& e^{\imath \phi} \left( \alpha  \ket{01} + \beta \ket{10} \right)\\
&=& e^{\imath \phi} \ket{\psi_L}\\
&=& \ket{\psi_L}
\end{eqnarray} because the overall phase factor $e^{\imath \phi}$ acquired due to the dephasing process has no physical significance.

In this example, the messages sent by Alice are binary, so $\mathcal{U} = \left\{ 0, 1\right\}$. She will encode the bits in the following way: $\tilde{k}(0) = \ket{01}$ and $\tilde{k}(1) = \ket{10}$. So, $\tilde{K}(\mathcal{U}) = \left\{ \ket{01}, \ket{10} \right\}$.  Making use of this code to send the message $u$, Alice encodes  in the corresponding $\tilde{k}(u)$ which she conveys through the channel. We assume that the bits $0$ and $1$ are equally likely and that the environment starts in a pure state.

To decode the states received Bob will use the following POVM: $\tilde{\mathcal{D}}_0 = \outter{01}{01}$ and $\tilde{\mathcal{D}}_1 = \outter{10}{10}$. His Holevo quantity in this scenario will be equal to

\begin{small}
\begin{align}
\chi^{\textrm{Bob}} &= S\left(\rho_{\textrm{Bob}}\tilde{k}(u)\right) - \sum_u p_u S\left(\rho_{\textrm{Bob},u}\right)\\
&=  S\left(\frac{1}{2}\outter{01}{01} + \frac{1}{2}\outter{10}{10}\right)
- \sum_{u = \left\{ 0,1 \right\}} \frac{1}{2} S\left(\tilde{k}(u)\right)\\
&= 1 - 0 - 0 = 1
\end{align}
\end{small} We can conclude that the secrecy capacity for this scenario is equal to $C_{S,DFS}(\mathcal{E}) = 1$ bit per channel use. This is an example of how to convey secret messages using a DFS through a noisy quantum channel with a positive rate using a very simple encoding-decoding scheme.

\section{Final Remarks} \label{sec:remarks}

From our analysis, we can conclude that if a quantum channel is characterized as in Definition \ref{def:canalDFS}, then the existence of certain symmetries in this channel can be exploited to send classical information with unconditional security. The encoding using a QEAC can be seen as an instance of a wiretap code with the particularity that no information is gathered by the wiretapper.

The secrecy capacity of such channels shown in Eq. (\ref{eq:SecrecyCapacityDFS}) is equal to the HSW capacity of a quantum channel \cite{Schumacher:QuantumChannel}. This is a particular case in which the ability of a quantum channel to send secret information can be made as large as its capacity to send ordinary classical information.

Regarding the secrecy capacity, Cai et. al \cite{Cai:QuantumWiretap} argue that obtaining a computable secrecy capacity is most likely to be even more difficult than obtaining a computable form of the capacity of a quantum channel for the classical information. In our particular case for collective-noise quantum wiretap channels, secrecy capacity equals the HSW capacity, what turns out to be less difficult to compute.

Due to the technical difficulties to build completely closed quantum system \cite{Byrd:UniversalLeakegeElimination}, the results shown here can be applied to build devices that perform unconditional secure message exchange even in the presence of decoherence. It is very promising in practical applications especially considering already existing results regarding the use of DFS in communications \cite{Dorner:CommunicationDFS,Jaeger:CommunicationDFS,Xia:CommunicationDFS}, particularly in long-distance \cite{Xue:CommunicationDFS}.

We emphasize that the results presented here cannot be generalized to all quantum channels since the conditions for a DFS are not satisfied by all of them. Zanardi and Rasetti \cite{Zanardi:CollectiveDecoherence} argue that the DFS conditions only arise in scenarios where collective decoherence takes place. Besides that, the advantages verified in terms of secrecy and rate to this particular case are significant.

In future work, we suggest the investigation of the impacts of the results shown in the simplification of certain communication protocols that make use of DFS \cite{Gu:DeterministicSecureQuantumCommunications,Qin:AmplitudeDampingCollectiveNoise,Dong:DSQC}. We also suggest the investigation of more general conditions to the existence of perfect secrecy in quantum systems.

\bibliographystyle{apsrev4-1}
\bibliography{ref}

\end{document}